\newtheorem{lem}{Lemma}
\newtheorem{Prop}{Proposition}
\newcommand{\tabincell}[2]{\begin{tabular}{@{}#1@{}}#2\end{tabular}}
\begin{document}

\title{Further Theoretical Study of Distribution Separation Method for Information Retrieval}

\author[1]{Peng Zhang}
\author[1,2]{Qian Yu}
\author[1,*]{Yuexian Hou}
\author[1,3,*]{Dawei Song}
\author[1]{Jingfei Li}
\author[4]{Bin Hu}
\affil[1]{Tianjin University, China}
\affil[2]{The Hong Kong Polytechnic University, Hong Kong}
\affil[3]{The Open University, United Kingdom}
\affil[4]{Lanzhou University, China}

\affil[*]{\{yxhou,dwsong\}@tju.edu.cn}

%\keywords{Keyword1, Keyword2, Keyword3}

\begin{abstract}
Recently, a Distribution Separation Method (DSM) is proposed for relevant feedback in information retrieval, which aims to approximate the true relevance distribution by separating a seed irrelevance distribution from the mixture one. While DSM achieved a promising empirical performance, theoretical analysis of DSM is still need further study and comparison with other relative retrieval model. In this article, we first generalize DSM's theoretical property, by proving that its minimum correlation assumption is equivalent to the maximum (original and symmetrized) KL-Divergence assumption. Second, we also analytically show that the EM algorithm in a well-known Mixture Model is essentially a distribution separation process and can be simplified using the linear separation algorithm in DSM. Some empirical results are also presented to support our theoretical analysis.
\end{abstract}

\flushbottom
\maketitle
% * <john.hammersley@gmail.com> 2015-02-09T12:07:31.197Z:
%
%  Click the title above to edit the author information and abstract
%
\thispagestyle{empty}

\section*{Introduction}
Relevant feedback is an effective method in information retrieval, which can significantly improve the retrieval performance. However, the approximation of relevance model is usually still a mixture model containing an irrelevant component. A Distribution Separation Method is recently proposed to solve this problem\cite{ZhangSigir09}.
The formulation of the basic DSM was based on two assumptions, namely the linear combination assumption and minimum correlation assumption. The former assumes that the mixture term distribution is a linear combination of the relevance and irrelevance distributions, while the later assumes that the relevance distribution should have the minimum correlation with the irrelevance distribution. The basic DSM provided a lower bound analysis for the linear combination coefficient, based on which the desired relevance distribution can be estimated.
It was also proved that the lower bound of the linear combination coefficient corresponds to the condition of the minimum Pearson correlation coefficient between DSM's output relevance distribution and an input seed irrelevance distribution.

In this article, we theoretically extend the generality of the aforementioned linear combination analysis and the minimum correlation analysis of DSM. First, we propose to explore the effect of DSM on the KL-divergence between DSM's output distribution and the seed irrelevance distribution. We    theoretically prove that the lower-bound analysis can also be applied to KL-divergence, and the minimum correlation coefficient corresponds to the maximum KL-divergence. We further prove that the decreasing correlation coefficient leads to a maximum symmetrized KL-divergence as well as JS-divergence.

Second, we investigate the relations between DSM and a well-known Mixture Model Feedback (MMF) approach \cite{Zhai01model-basedfeedback} in information retrieval. We theoretically show that the EM-based iterative algorithm in MMF is essentially a distribution separation process and thus its iterative steps can be simplified by the linear separation technique developed in DSM without decline of performance.

% we propose to explore the connections between DSM and other two models, i.e., the KL-divergence model and Mixture Model~\cite{Zhai01model-basedfeedback}. We theoretically prove four propositions to generalize the theoretical analysis of DSM. Specifically, the minimum correlation coefficient is shown equivalent to the maximum KL-divergence, the maximum symmetrized KL-divergence and the maximum JS-divergence under the DSM framework,

% the EM iterative algorithm in the Mixture Model can be simplified using the linear separation technique of DSM.

\section*{Basic Analysis of DSM}
\label{sec:DSM}

In this section, we briefly describes assumptions and analysis of the basic DSM \cite{ZhangSigir09}. We use $M$ to represent the mixture term distribution derived from all the feedback documents, and we believe that $M$ is a mixture of relevance term distribution $R$ and irrelevance term distribution $I$. In addition, we assume that only part of the irrelevance distribution $I_S$ (also called as seed irrelevance distribution) is available, while the other part of irrelevance distribution is unknown (denoted as $I_{\overline{S}}$).

The task of DSM can be defined as follows: given the mixture
distribution $M$ and a seed irrelevance distribution $I_{S}$, to derive an
output distribution that can approximates the $R$ as closely as possible. Specifically, as shown in Figure~\ref{fig:linearCombi}, the task of DSM can be divided into two problems:
(1) How to separate  $I_S$ from $M$, and derive $l(R,I_{\overline{S}})$, which
is less noisy but is still a mixture of the true relevance distribution ($R$) and the
unknown irrelevance distribution ($I_{\overline{S}}$).
%, based on the two given distributions $M$ and the seed irrelevant
%$I_{S}$.
(2) How to further refine the derived distribution $l(R,I_{\overline{S}})$ to approximate $R$ as closely as possible?
\begin{table}[t]
\centering
\begin{tabular}{c|l}
\textbf{Notation} & \textbf{Description}\\ \hline
$M$ & Mixture term distribution \\
$R$ & Relevance term distribution \\
$I$ & Irrelevance term distribution.\\
$I_S$ & Seed Irrelevance  distribution \\
$I_{\overline{S}}$ & Unknown Irrelevance distribution\\
$F(i)$ & Probability of the $i^{th}$ term in any distribution $F$ \\
$l(F,G)$ & Linear combination of distributions $F$ and $G$ \\
\end{tabular}
\caption{Notations\label{tab:Notation}}
\end{table}
\begin{figure}[t]
\centering
\includegraphics[width=2in]{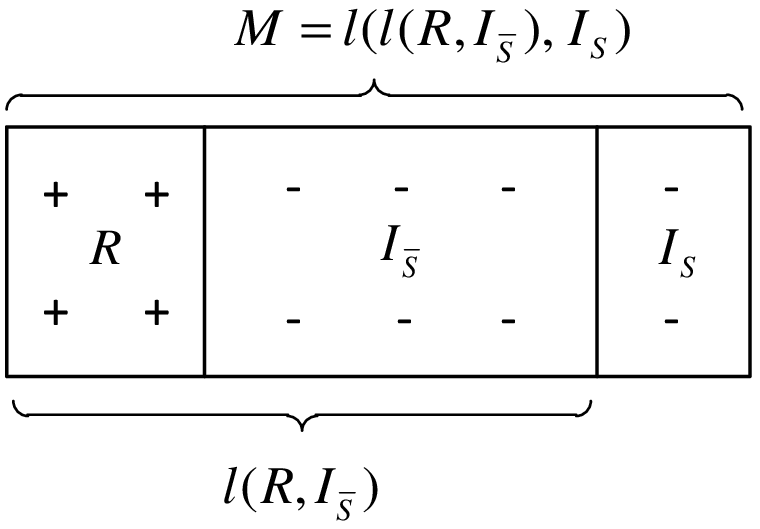}
\caption{An illustration of the linear combination $l(\cdot,\cdot)$
between two term distributions. } \label{fig:linearCombi}
\end{figure}
To solve the above two problems, DSM assumes that a term distribution derived from a feedback document set $D$ is a linear combination of two term distributions, which are derived from two document subsets that form a partition of $D$. Under such a condition, the mixture distribution $M$ derived from all the feedback documents can be a linear combination of $R$ (derived from relevant documents) and  $I$ (derived from irrelevant documents). As shown in Figure~\ref{fig:linearCombi}, $M$ can be a linear combination of two distributions $I_{S}$ and
$l(R,I_{\overline{S}})$ , where
$l(R,I_{\overline{S}})$ is also a linear combination of $R$ and $I_{\overline{S}}$.
Bear in mind that in the linear combination of $R$ and $I$,  both $R$ and $I$ are unknown for DSM. Therefore, determining the linear combination of $I_{S}$ and $l(R,I_{\overline{S}})$ and the linear combination of  $R$ and $I_{\overline{S}}$ are key for solving the above problem (1) and (2), respectively.

%Table~\ref{tab:Notation} lists basic notations of distributions used in our paper.

%
%separate the true relevance distribution $R$ from $I_{\overline{S}}$
%in $l(R,I_{\overline{S}})$, and then
%the unknown irrelevance distribution $\overline{S}$ in

\subsection*{Linear Combination Analysis}\label{subsec:DMRSND}

%, which corresponds to the rest of documents in $D$ by removing the
%seed negative documents

Since $M$ is a nested linear combination
$l(l(R,I_{\overline{S}}),I_{S})$, it can be represented
as:
\begin{equation}\label{eqa:firstLC}
M = \lambda \times l(R,I_{\overline{S}}) + (1-\lambda)\times I_{S}
\end{equation} %corresponding to $l(R,I_{\overline{S}})$.
where $\lambda$ $(0 \!< \!\lambda \leq 1)$ is the real linear coefficient.
%~\footnote{In Eq.~\ref{eqa:firstLC}, $M$, $I_{S}$ and
%$l(R,I_{\overline{S}})$ are vectors in which the $i^{th}$ entry is
%the probability of the $i^{th}$ term.}
The problem of estimating $l(R,I_{\overline{S}})$ does not have a unique solution generally with coefficient $\lambda$ is unknown.
Therefore, the key is to estimate $\lambda$. Let $\hat{\lambda} (0 <\! \hat{\lambda}\! \leq 1)$ denote an
estimate of $\lambda$, and correspondingly let
$\hat{l}(R,I_{\overline{S}})$ be the estimation of the desired
distribution $l(R,I_{\overline{S}})$. According to Eq.~\ref{eqa:firstLC}, we have
\begin{equation}\label{eqa:computeR}
\hat{l}(R,I_{\overline{S}}) = \frac{1}{\hat{\lambda}}\times M +
(1-\frac{1}{\hat{\lambda}})\times I_S.
\end{equation}
% There can be infinite possible choices of $\hat \lambda$ and its corresponding $\hat{l}(R,I_{\overline{S}})$.
Then with the constraint that all the values in the distribution
$\hat{l}(R,I_{\overline{S}})$ are not less than 0, we can have
% To obtain a $\hat\lambda$ which can estimate the real coefficient $\lambda$ as closely as possible, we introduce a constraint
% \begin{equation}
% \label{eqa:constraint1} \hat{l}(R,I_{\overline{S}})\succcurlyeq  0,
% \end{equation}
% which means that all the values in the distribution
% $\hat{l}(R,I_{\overline{S}})$ are not less than 0. Based on
% Eq.~\ref{eqa:computeR} and Eq.~\ref{eqa:constraint1}, we have
% \begin{displaymath}
% \frac{1}{\hat{\lambda}}\times M \succcurlyeq (\frac{1-\hat
% \lambda}{\hat{\lambda}})\times I_S.
% \end{displaymath}
% Then, we get
\begin{equation}
\label{eqa:inequationlambda} \hat \lambda \times {\bf
1} \succcurlyeq ({\bf
1}-M./I_{S})
\end{equation}
where ${\bf 1}$ stands for a vector in which all the entries are 1, and
$./$ denotes the entry-by-entry division of $M$ by $I_{S}$. Note
that if there is zero value in $I_{S}$, then $\hat \lambda
> 1-\infty$. It is still valid since $\hat \lambda >
0$.
Effectively, Eq.~\ref{eqa:inequationlambda} sets a lower bound of
$\hat{\lambda}$ :
\begin{equation}\label{eqa:l_lambda}
\lambda_{L} = \max{({\bf 1}-M./I_{S})}
\end{equation}
where $\max({\cdot})$ denotes the max value in the resultant vector
${\bf 1}-M./I_{S}$.
The lower bound $\lambda_L$ itself also
determines an estimate of $l(R,I_{\overline{S}})$, denoted as
$l_{L}(R,I_{\overline{S}})$.

%Note that in this subsection, the discussion is only based on the
%case when $P_M(i)<P_{I_S}(i)$. We will discuss other cases in the
%later subsections.

\begin{table}[t]
\centering
{
\begin{tabular}{l|l|l}
\textbf{Original} & \textbf{Simplified} & \textbf{Linear Coefficient}\\
\hline $l(R,I_{\overline{S}})(i)$ & $l(i)$ & $\lambda$ \\
 $\hat{l}(R,I_{\overline{S}})(i)$ & $\hat{l}(i)$ & $\hat{\lambda}$ (estimate of $\lambda$)\\
 $l_L(R,I_{\overline{S}})(i)$ & $l_L(i)$ & $\lambda_{L}$ (lower bound of $\hat{\lambda}$)\\
% $l_L^{'}(R,I_{\overline{S}})(i)$ & $l_L^{'}(i)$ & $\lambda_{L}^{'}$ (refined lower bound)\\
% $l^{'}(R,I_{\overline{S}})(i)$ & $l^{'}(i)$ & $\lambda^{'}$ (refined $\lambda$ )\\
\end{tabular}}
\caption{Simplified Notations\label{tab:SimpleNotation}}
\end{table}

The lower bound $\lambda_{L}$ is essential to the estimation of
${\lambda}$. Now, we present an important property of $\lambda_L$ in
Lemma~\ref{lemma:zeroValue}. For simplicity, we use some
simplified notations listed in Table~\ref{tab:SimpleNotation}.
Lemma~\ref{lemma:zeroValue} guarantees that if there exists zero
value (e.g., for a term $i$, $l(i)=0$) in $l(R,I_{\overline{S}})$, then $\lambda = \lambda_{L}$. In relevance feedback, if there is no distribution smoothing step involved for feedback model, zero values often exist in $l(R; I_S)$, leading to the distribution ${l}_{L}(R,I_{\overline{S}})$ $w.r.t.$ $\lambda_L$ being exactly the desired
distribution $l(R,I_{\overline{S}})$ $w.r.t.$ $\lambda$.

\begin{lem}
\label{lemma:zeroValue}
If there exists a zero value in $l(R,I_{\overline{S}})$, then
$\lambda = \lambda_{L}$, leading to $l(R,I_{\overline{S}}) =
l_{L}(R,I_{\overline{S}})$.
\end{lem}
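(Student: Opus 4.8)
The plan is to work entirely at the level of individual terms, using the scalar form of the mixture relation from Eq.~\ref{eqa:firstLC}, namely $M(i) = \lambda\, l(i) + (1-\lambda)\, I_S(i)$ for each term $i$. The key idea is that the quantity whose maximum defines $\lambda_L$ (Eq.~\ref{eqa:l_lambda}) can be rewritten in terms of $l$ and $\lambda$ alone. First I would substitute the mixture relation into the defining expression $1 - M(i)/I_S(i)$ and simplify: for every index with $I_S(i) > 0$ a short calculation yields the identity
\begin{equation}
\label{eqa:keyIdentity}
1 - \frac{M(i)}{I_S(i)} = \lambda\left(1 - \frac{l(i)}{I_S(i)}\right).
\end{equation}
This reduces the whole problem to understanding the right-hand side, and replaces the unknown $M$ by the quantity $l$ that appears in the hypothesis.

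Next I would take the maximum over all terms. Because $\lambda > 0$, it factors out of the maximum, so that $\lambda_L = \max_i\bigl(1 - M(i)/I_S(i)\bigr) = \lambda \cdot \max_i\bigl(1 - l(i)/I_S(i)\bigr)$. Since $l(R,I_{\overline{S}})$ is a genuine term distribution, every entry satisfies $l(i) \geq 0$, whence $1 - l(i)/I_S(i) \leq 1$ with equality precisely when $l(i) = 0$. The hypothesis of the lemma supplies a term $j$ with $l(j)=0$, so the maximum on the right-hand side is attained and equals $1$. Consequently $\lambda_L = \lambda \cdot 1 = \lambda$.

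Finally, having established $\lambda_L = \lambda$, I would invoke the fact that both $l_L(R,I_{\overline{S}})$ and the true $l(R,I_{\overline{S}})$ are produced by the same affine formula (the term-wise version of Eq.~\ref{eqa:computeR}) upon plugging in the coefficients $\lambda_L$ and $\lambda$ respectively; since these coefficients coincide, the two distributions coincide entry by entry, giving $l_L(R,I_{\overline{S}}) = l(R,I_{\overline{S}})$.

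I expect the main technical care to be the treatment of zero entries of $I_S$, where the ratio $M(i)/I_S(i)$ is undefined. Following the convention introduced after Eq.~\ref{eqa:inequationlambda}, such indices contribute $-\infty$ (written there as $1-\infty$) and so never realize the maximum; the identity~\ref{eqa:keyIdentity} and the maximization above therefore need only be carried out over the indices with $I_S(i) > 0$. In particular one should check that the hypothesized zero of $l$ can be taken at an index with $I_S(j) > 0$, so that it genuinely attains the bound $1$; at an index where $I_S(j)=0$ the mixture relation forces $M(j)=0$ as well, and the term is uninformative. Handling this bookkeeping cleanly is the only delicate point, the algebra itself being elementary.
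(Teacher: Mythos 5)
The paper itself gives no proof of this lemma --- it defers entirely to the cited reference \cite{ZhangSigir09} --- so there is no in-paper argument to compare against; I can only assess your proof on its own terms. Its core is correct and is surely the intended argument: substituting the term-wise mixture relation $M(i)=\lambda\, l(i)+(1-\lambda)\, I_S(i)$ into $1-M(i)/I_S(i)$ yields $1-\frac{M(i)}{I_S(i)}=\lambda\bigl(1-\frac{l(i)}{I_S(i)}\bigr)$ for every index with $I_S(i)>0$; since $\lambda>0$ factors out of the maximum and $1-l(i)/I_S(i)\le 1$ with equality exactly when $l(i)=0$, a zero of $l$ at such an index forces $\lambda_L=\lambda$, and then Eq.~\ref{eqa:computeR} with $\hat\lambda=\lambda_L=\lambda$ reproduces $l(R,I_{\overline{S}})$ entry by entry.

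However, the caveat you relegate to ``bookkeeping'' at the end is not dispatchable, and this is the one genuine gap. If the only zero of $l$ sits at an index where $I_S$ also vanishes, the lemma as stated is actually false, so no amount of checking can place that zero at an index with $I_S(j)>0$. Concretely, take $m=3$, $I_S=(0.5,\,0.5,\,0)$, $l=(0.3,\,0.7,\,0)$ and $\lambda=0.5$, so that $M=(0.4,\,0.6,\,0)$. The third index is the $0/0$ case and imposes no constraint on $\hat\lambda$ (the nonnegativity of $\hat l(3)$ holds for every $\hat\lambda$), so it cannot contribute to the maximum; hence $\lambda_L=\max(1-0.8,\;1-1.2)=0.2\neq 0.5=\lambda$, and indeed $l_L=5M-4I_S=(0,\,1,\,0)$ is a perfectly legal distribution different from $l$. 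The correct statement therefore needs the additional hypothesis that $l(j)=0$ for some $j$ with $I_S(j)>0$ (equivalently, in the IR setting, that $I_S$ is strictly positive, as smoothing typically guarantees). With that hypothesis made explicit your argument is complete; without it, the final ``check'' in your plan fails because no such index need exist.
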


The proof can be found in~\cite{ZhangSigir09}.
% Intuitively, if $l(i)=0$,
% we can get $\lambda=1-M(i)/I_S(i)$ based on
% Eq.~\ref{eqa:firstLC}. This $\lambda$ is actually the lower bound $\lambda_L$ in
% Eq.~\ref{eqa:l_lambda}. If not, increasing or reducing $\lambda$  will force $l(i)$ to
% be a positive or negative value, respectively. This contradicts the precondition $l(i)=0$.
In the IR background, after applying smoothing (usually with the collection
model), there would be no zero values, but instead a lot of small
values exist, in $l(R,I_{\overline{S}})$. In this case,
Remark~\ref{remark:approLambda} guarantees the approximate equality
between $\lambda_L$ and $\lambda$. The detailed description of this remark can be found in~\cite{ZhangSigir09}.

\newtheorem{remark}{Remark}
 \begin{remark}\label{remark:approLambda}
 If there is no zero value, but there exist a few very small values
 in $l(R,I_{\overline{S}})$, i.e., $0<{l}(i)\leq \delta$, where
 $\delta$ is a very small value, then $l_{L}(R,I_{\overline{S}})$
 will be approximately equal to $l(R,I_{\overline{S}})$.
 \end{remark}

\subsection*{Minimum Correlation Analysis}
\label{subsec:MCA}

In this section, we go in-depth to see another property of the combination coefficient and its lower bound. Specifically,
we analyze the correlation between $\hat{l}(R,I_{\overline{S}})$ and $I_S$, along with the decreasing coefficient $\hat \lambda$. Pearson product-moment correlation
coefficient \cite{citeulike:361042}, denoted as $\rho$ $(-1\leq \rho \leq 1)$,
is used as the correlation measurement.

\begin{Prop}\label{pro:minimumcorrelation}
If $\hat{\lambda}$ $(\hat{\lambda}>0)$ decreases, the correlation
coefficient between $\hat{l}(R,I_{\overline{S}})$ and $I_S$, i.e.,
$\rho(\hat{l}(R,I_{\overline{S}}),I_S)$, will decrease.
\end{Prop}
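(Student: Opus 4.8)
The plan is to reparametrize the problem in a single scalar. Setting $t = 1/\hat{\lambda}$, Eq.~\ref{eqa:computeR} becomes $\hat{l} = tM + (1-t)I_S = I_S + t(M - I_S)$, and since $\hat{\lambda}$ ranges over $(0,1]$, the variable $t$ ranges over $[1,\infty)$ and \emph{increases} exactly as $\hat{\lambda}$ decreases. It therefore suffices to show that $\rho(\hat{l},I_S)$ is a non-increasing function of $t$. Throughout I assume $I_S$ is non-constant so that its variance, and hence $\rho$, is well defined.

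The crucial structural fact I would exploit is that $M$, $I_S$ and $\hat{l}$ are all (formal) probability distributions over the $n$ terms, each summing to $1$, so the three share the same coordinate mean $1/n$. Writing $\tilde{x} = x - \frac{1}{n}\mathbf{1}$ for the centred version of a distribution $x$, the $\frac{1}{n}\mathbf{1}$ offsets cancel in the affine relation, giving $\tilde{\hat{l}} = (1-t)\tilde{I_S} + t\,\tilde{M}$. Because the Pearson coefficient is the cosine of the angle between centred vectors, $\rho(\hat{l},I_S) = \langle \tilde{\hat{l}}, \tilde{I_S}\rangle/(\|\tilde{\hat{l}}\|\,\|\tilde{I_S}\|)$, I introduce $a = \|\tilde{I_S}\|^2$, $b = \langle \tilde{M}, \tilde{I_S}\rangle$, $c = \|\tilde{M}\|^2$, together with $p = b-a$ and $q = a - 2b + c = \|\tilde{M}-\tilde{I_S}\|^2$. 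This collapses the correlation to the one-variable form
\[
\rho(t) = \frac{a + tp}{\sqrt{a}\,\sqrt{a + 2tp + t^2 q}}.
\]

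Next I would differentiate $\rho(t)$ directly via the quotient rule. After clearing the common factor $\sqrt{a}$ and the power of the denominator, the numerator of $\rho'(t)$ simplifies dramatically: the cross terms cancel and what survives is proportional to $t(p^2 - aq)$, and a one-line computation identifies $p^2 - aq = b^2 - ac$. Hence
\[
\rho'(t) = \frac{t\,(b^2 - ac)}{\sqrt{a}\,\bigl(a + 2tp + t^2 q\bigr)^{3/2}}.
\]
The final step is the sign analysis, which is where the only genuine inequality enters. The denominator is positive and $t > 0$, so $\mathrm{sign}\,\rho'(t) = \mathrm{sign}\,(b^2 - ac) = \mathrm{sign}\bigl(\langle \tilde{M}, \tilde{I_S}\rangle^2 - \|\tilde{M}\|^2\|\tilde{I_S}\|^2\bigr) \le 0$ by the Cauchy--Schwarz inequality. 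Thus $\rho$ is non-increasing in $t$, hence non-decreasing in $\hat{\lambda}$, so it decreases whenever $\hat{\lambda}$ decreases, as claimed.

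I expect the main obstacle to be purely algebraic: verifying that the several terms in the numerator of $\rho'(t)$ really do collapse to the single term $t(b^2-ac)$, and in particular confirming the identity $p^2 - aq = b^2 - ac$. A secondary point worth flagging is the decision to differentiate $\rho$ itself rather than $\rho^2$: the latter is easier to differentiate but forces one to track the sign of the numerator $a+tp$ when $\rho$ passes through $0$, whereas working with $\rho$ directly yields the clean closed form above and sidesteps that bookkeeping entirely.
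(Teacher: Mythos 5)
Your proof is correct and takes essentially the same route as the paper's proof (which the paper defers to \cite{ZhangSigir09}): the reciprocal substitution $t=1/\hat{\lambda}$ (the paper's $\xi$), reduction of $\rho$ to a one-variable ratio of a linear form over the square root of a quadratic, differentiation with the numerator collapsing to $t(b^2-ac)$, and Cauchy--Schwarz to settle the sign — indeed your zero-correlation point $\hat{\lambda}=-p/a$ coincides exactly with the formula $\hat{\lambda}=-a/b$ that the paper quotes from that proof. The only caveat, which the paper's statement shares, is the degenerate case where $M-\frac{1}{m}{\bf 1}$ is proportional to $I_S-\frac{1}{m}{\bf 1}$: there Cauchy--Schwarz holds with equality and $\rho$ is constant, so ``decrease'' must be read as ``not increase,'' as your non-increasing conclusion correctly reflects.
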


More precisely, the condition of $\hat{\lambda}$ is $\lambda_L  \leq \hat{\lambda} \leq 1$ to ensure that the output distribution $\hat{l}(R,I_{\overline{S}})$ is a distribution. The proof of Proposition~\ref{pro:minimumcorrelation} can be found in~\cite{ZhangSigir09}.
According to Proposition~\ref{pro:minimumcorrelation}, among all
$\hat{\lambda} \in [\lambda_L,1]$, $\lambda_L$ corresponds to
$\min(\rho)$, i.e., the minimum correlation coefficient between
$\hat{l}(R,I_{\overline{S}})$ and $I_S$.

We can also change the \textit{minimum correlation coefficient} (i.e., $\min{(\rho)}$) to \textit{minimum squared correlation coefficient} (i.e.,
$\min{(\rho^2)}$). This idea
can be formulated as the following optimization problem:
\begin{equation}\label{eqa:minimumCorrelation2}
\begin{split}
&\min_{\hat{\lambda}}{[\rho(\hat{l}(R,I_{\overline{S}}),\ I_S)]^2}\ \ \ \  s.t. \quad \lambda_L \leq \hat{\lambda} \leq 1
\end{split}
\end{equation}
%where $\alpha=$

To solve this optimization problem, we need to first find a
$\hat{\lambda}$ such that the corresponding
$\rho(\hat{l}(R,I_{\overline{S}}),\ I_S)=0$. According to the proof
of Proposition~\ref{pro:minimumcorrelation} in~\cite{ZhangSigir09}, this
$\hat{\lambda}=-\frac{a}{b}$, where $a=\sum_{i}^{m}({I_S}(i) -
\frac{1}{m})({M}(i)-{I_S}(i))$, $b=\sum_{i}^{m}({I_S}(i) -
\frac{1}{m})^2$, and
$m$ is the number of terms. Then, we need to check whether
$\lambda_L \leq -\frac{a}{b}\leq 1$ holds. If it holds, the optimal
linear coefficient $\hat \lambda$ for the optimization problem in
Eq.~\ref{eqa:minimumCorrelation2} is $-\frac{a}{b}$. Otherwise, we
just compare the values of $[\rho(\hat{l}(R,I_{\overline{S}}),\
I_S)]^2$ $w.r.t.$ $\hat{\lambda}=1$ and $\hat{\lambda}=\lambda_L$, in
order to get the optimal $\hat \lambda$ of the objective function in Eq.~\ref{eqa:minimumCorrelation2}.
% Practically, in our former experiments, we observed that the negative correlation value seldom
% exists, which results in that the solutions of $\min{(\rho)}$ and $\min{(\rho^2)}$ usually equal to the lower bound $\lambda_L$. As aforementioned, $\lambda_L$'s
%  corresponding distribution
% $l_L(R,I_{\overline{S}})$ can equal to $l(R,I_{\overline{S}})$, which is still a mixture distribution.

\section*{Generalized Analysis of DSM}
\label{sec:GADSM}
Now, we generalize DSM's minimum correlation assumption, by extending the minimum correlation analysis to the analysis of the maximum KL-divergence, the maximum symmetrized KL-divergence and the maximum JS-divergence.

%\subsection{Maximum KL-Divergence Analysis}

\subsection*{Effect of DSM on Maximizing KL-Divergence}
\label{sec:sec:kl}
 Recall that in Section~\ref{subsec:MCA}, Proposition~\ref{pro:minimumcorrelation} shows that after the distribution separation process, the Pearson correlation coefficient between DSM's output distribution $\hat{l}(R,I_{\overline{S}})$ and the seed irrelevance distribution $I_S$ can be minimized. Here, we further analyze the effect of DSM on the KL-divergence between $\hat{l}(R,I_{\overline{S}})$ and $I_S$.

Specifically, we propose the following Proposition~\ref{pro:maxKL}, which proves that if $\hat{\lambda}$ decreases, the KL-divergence between $\hat{l}(R,I_{\overline{S}})$ and $I_S$ will be increased monotonously.

\begin{Prop}
\label{pro:maxKL}
If $\hat{\lambda}$ $(\hat{\lambda}>0)$ decreases, the KL-divergence between $\hat{l}(R,I_{\overline{S}})$ and $I_S$ will increase.
\end{Prop}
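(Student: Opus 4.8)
The plan is to regard the KL-divergence $D_{KL}\big(\hat{l}(R,I_{\overline{S}})\,\|\,I_S\big)$ as a smooth function $g(\hat\lambda)$ on the admissible interval $\lambda_L\le\hat\lambda\le 1$ and to show that $g'(\hat\lambda)\le 0$; monotone non-increase in $\hat\lambda$ is exactly the claimed increase as $\hat\lambda$ decreases. Using the simplified notation of Table~\ref{tab:SimpleNotation}, Eq.~\ref{eqa:computeR} gives the coordinatewise form $\hat{l}(i)=\frac{1}{\hat\lambda}M(i)+\big(1-\frac{1}{\hat\lambda}\big)I_S(i)=I_S(i)+\frac{1}{\hat\lambda}\big(M(i)-I_S(i)\big)$, so that $g(\hat\lambda)=\sum_i \hat{l}(i)\log\frac{\hat{l}(i)}{I_S(i)}$ with $\hat{l}'(i)=-\frac{1}{\hat\lambda^2}\big(M(i)-I_S(i)\big)$.

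First I would differentiate term by term, using $\frac{d}{d\hat\lambda}\big[\hat{l}(i)\log\frac{\hat{l}(i)}{I_S(i)}\big]=\hat{l}'(i)\big(\log\frac{\hat{l}(i)}{I_S(i)}+1\big)$. The crucial simplification is that the stray $+1$ contribution vanishes: $\sum_i \hat{l}'(i)=-\frac{1}{\hat\lambda^2}\sum_i\big(M(i)-I_S(i)\big)=0$, because both $M$ and $I_S$ are probability distributions summing to one. Hence $g'(\hat\lambda)=\sum_i \hat{l}'(i)\log\frac{\hat{l}(i)}{I_S(i)}$.

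Next I would substitute $\hat{l}'(i)=-\frac{1}{\hat\lambda^2}\big(M(i)-I_S(i)\big)$ and use $M(i)-I_S(i)=\hat\lambda\big(\hat{l}(i)-I_S(i)\big)$ to obtain $g'(\hat\lambda)=-\frac{1}{\hat\lambda}\sum_i\big(\hat{l}(i)-I_S(i)\big)\log\frac{\hat{l}(i)}{I_S(i)}$. The sign then follows from the elementary inequality $(a-b)\log\frac{a}{b}\ge 0$ for all $a,b>0$ (the two factors always share a sign), applied coordinatewise with $a=\hat{l}(i)$ and $b=I_S(i)$; equivalently, that sum is precisely the symmetrized KL (Jeffreys) divergence between $\hat{l}$ and $I_S$ and is therefore non-negative. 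Since $\hat\lambda>0$, this yields $g'(\hat\lambda)\le 0$, so $g$ is non-increasing in $\hat\lambda$ and the divergence increases as $\hat\lambda$ decreases, as claimed.

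The main obstacle is conceptual rather than computational: one must notice that the positivity $\hat{l}(i)>0$ on $[\lambda_L,1]$, guaranteed by the lower-bound construction of Eq.~\ref{eqa:l_lambda}, is exactly what makes the logarithm well-defined and the coordinatewise inequality $(a-b)\log\frac{a}{b}\ge 0$ applicable. A secondary point worth verifying carefully is the vanishing of $\sum_i\hat{l}'(i)$, which is where the assumption that $M$ and $I_S$ are genuine distributions enters; without it the extra term would spoil the clean sign argument. I would also note in passing that this same derivative computation immediately feeds the subsequent symmetrized-KL and JS-divergence claims, since the Jeffreys divergence has already surfaced here as the governing quantity.
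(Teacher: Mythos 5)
Your proof is correct and follows essentially the same route as the paper's: both differentiate the KL-divergence with respect to the mixing parameter (the paper substitutes $\xi = 1/\hat\lambda$ while you work in $\hat\lambda$ directly), both use the normalization $\sum_i M(i) = \sum_i I_S(i) = 1$ to kill the stray constant term, and both finish with the same coordinatewise sign argument --- which the paper states as $(M(i)-I_S(i))$ and the logarithm sharing a sign, and which you repackage as non-negativity of $(a-b)\log\frac{a}{b}$, i.e.\ of the Jeffreys divergence. The only minor difference is that you conclude non-strict monotonicity ($g'(\hat\lambda)\le 0$), whereas the paper additionally invokes $M \not\equiv I_S$ to claim the derivative is strictly positive; your argument yields the same strictness under that same hypothesis.
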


\begin{proof}
Using the simplified notations in Table~\ref{tab:SimpleNotation},
let the KL-divergence of between $\hat{l}(R,I_{\overline{S}})$ and $I_S$ be formulated as
\begin{equation}\label{eq:DivDef}
\begin{split}
D(\hat{l}(R,I_{\overline{S}}),I_S) & = \sum_{i=1}^{m} \hat{l}(R,I_{\overline{S}})(i) \log(\frac{\hat{l}(R,I_{\overline{S}})(i)}{I_S(i)}) = \sum_{i=1}^{m} \hat{l}(i) \log(\frac{\hat l(i)}{I_S(i)})
\end{split}
\end{equation}
Now let $\xi=1/{\hat \lambda}$ as we did in the proof of Proposition~\ref{pro:minimumcorrelation} (see~\cite{ZhangSigir09}). According to Eq.~\ref{eqa:computeR}, we have $\hat{l}(R,I_{\overline{S}}) = \xi \times M +
(1-\xi)\times I_S$. It then turns out that
\begin{equation}\label{eq:est_Li}
 \hat{l}(i) = \xi \times (M(i)-I_S(i)) + I_S(i).
\end{equation}
Based on Eq.~\ref{eq:DivDef} and~\ref{eq:est_Li}, we get
\begin{equation}
\begin{split}
& ~~~~D(\hat{l}(R,I_{\overline{S}}),I_S)  = \sum_{i=1}^{m} (\xi \times (M(i)-I_S(i)) + I_S(i)) \log(\frac{\xi \times (M(i)-I_S(i)) + I_S(i)}{I_S(i)})
\end{split}
\end{equation}
Let $D(\xi) = D(\hat{l}(R,I_{\overline{S}}),I_S)$. The derivative of $D(\xi)$ can be calculated as
\begin{equation}
\begin{split}
& ~~~~D'(\xi)  ~= \sum_{i=1}^{m} [M(i)-I_S(i)+(M(i)-I_S(i))\log(\frac{\xi \times (M(i)-I_S(i)) + I_S(i)}{I_S(i)})]
\end{split}
\end{equation}
Since $\sum_{i=1}^{m}M(i) =1$ and $\sum_{i=1}^{m}I_S(i) =1$, $\sum_{i=1}^{m} [M(i)-I_S(i)]$ becomes 0. We then have
\begin{equation}\label{eq:D_xi_derivative}
\begin{split}
D'(\xi) & = \sum_{i=1}^{m} (M(i)-I_S(i))\log(\frac{\xi \times (M(i)-I_S(i)) + I_S(i)}{I_S(i)}) \\
& = \sum_{i=1}^{m} (M(i)-I_S(i))\log(\frac{\xi \times (M(i)-I_S(i))}{I_S(i)}+1)
\end{split}
\end{equation}
Let the $i^{th}$ term in the summation of Eq.~\ref{eq:D_xi_derivative} is
\begin{displaymath}
D'(\xi)(i) = (M(i)-I_S(i))\log(\frac{\xi \times (M(i)-I_S(i))}{I_S(i)}+1)
\end{displaymath}
It turns out that when $M(i)>I_S(i)$ and $M(i)<I_S(i)$, $D'(\xi)(i)$ is greater than 0. When $M(i) = I_S(i)$, $D'(\xi)(i)$ is 0. However, $M(i)$ not always equals to $I_S(i)$. Therefore, $D'(\xi) = \sum_{i=1}^{m}D'(\xi)(i)$ is greater than 0.

In conclusion, we have $D'(\xi)>0$. It means that $D(\xi)$ (i.e., $D(\hat{l}(R,I_{\overline{S}}),I_S)$) increases after $\xi$ increases. Since $\lambda = 1/\xi$,  after $\hat{\lambda}$ decreases, $D(\hat{l}(R,I_{\overline{S}}),I_S)$ will increase.
\end{proof}

According to Proposition~\ref{pro:maxKL}, if $\lambda$ reduced to its lower bound $\lambda_L$, then the corresponding KL-divergence $D(l_L(R,I_{\overline{S}}),I_S)$ will be the maximum value for all the legal $\hat \lambda$ ($\lambda_L \leq \hat \lambda < 1$ ). In this case, the output distribution of DSM will have the maximum KL-divergence with the seed irrelevance distribution.

\subsection*{Effect of DSM on Maximizing symmetrized KL-Divergence}

%    \begin{table}[ht]
%    \centering
%    \tbl{Simplified Notations\label{tab:SimpleNotationhatR}
%    }{
%    \begin{tabular}{l|l|l}
%    \textbf{Original} & \textbf{Simplified} & \textbf{Linear Coefficient}\\
%    \hline $l(R,I_{\overline{S}})(i)$ & $R(i)$ & $\lambda$ \\
%     $\hat{l}(R,I_{\overline{S}})(i)$ & $\hat l(i)$ & $\hat{\lambda}$ (estimate of $\lambda$)\\
%     $l_L(R,I_{\overline{S}})(i)$ & $R_L(i)$ & $\lambda_{L}$ (lower bound of $\hat{\lambda}$)\\
%    % $l_L^{'}(R,I_{\overline{S}})(i)$ & $l_L^{'}(i)$ & $\lambda_{L}^{'}$ (refined lower bound)\\
%    % $l^{'}(R,I_{\overline{S}})(i)$ & $l^{'}(i)$ & $\lambda^{'}$ (refined $\lambda$ )\\
%    \end{tabular}}
%    \end{table}
Having shown the effect of reducing the coefficient $\hat \lambda$ on the KL-divergence between $\hat{l}(R,I_{\overline{S}})$ and $I_S$, we now investigate the effect on the symmetrized KL-divergence between two involved distributions by proving the following proposition.

\begin{Prop}
\label{pro:maxSD}
If $\hat{\lambda}$ $(\hat{\lambda}>0)$ decreases, the symmetrized KL-divergence between $\hat{l}(R,I_{\overline{S}})$ and $I_S$ will increase.
\end{Prop}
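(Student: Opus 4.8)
The plan is to reduce the problem to the one-sided result just proved. First I would write the symmetrized KL-divergence as the sum of the two directed divergences and collapse it into a single sum. Using $\sum_{i=1}^{m}\hat{l}(i)=\sum_{i=1}^{m}I_S(i)=1$,
\begin{equation}
D_{sym}(\hat{l}(R,I_{\overline{S}}),I_S) = D(\hat{l}(R,I_{\overline{S}}),I_S) + D(I_S,\hat{l}(R,I_{\overline{S}})) = \sum_{i=1}^{m}(\hat{l}(i)-I_S(i))\log\Big(\frac{\hat{l}(i)}{I_S(i)}\Big).
\end{equation}
As in the proof of Proposition~\ref{pro:maxKL}, I would set $\xi=1/\hat{\lambda}$ and substitute $\hat{l}(i)=\xi(M(i)-I_S(i))+I_S(i)$, so that $\hat{l}(i)-I_S(i)=\xi(M(i)-I_S(i))$, and then study $D_{sym}(\xi)$ as a function of $\xi$.

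Differentiating term by term, the product rule should split the derivative as $D'_{sym}(\xi)=A(\xi)+B(\xi)$. Here $A(\xi)=\sum_{i=1}^{m}(M(i)-I_S(i))\log\big((\xi(M(i)-I_S(i))+I_S(i))/I_S(i)\big)$ comes from differentiating the prefactor $\xi(M(i)-I_S(i))$, and is \emph{exactly} the quantity $D'(\xi)$ shown to be positive in Proposition~\ref{pro:maxKL}; the term $B(\xi)=\sum_{i=1}^{m}\xi(M(i)-I_S(i))^2/\hat{l}(i)$ comes from differentiating the logarithm. The key observation is that each summand of $B(\xi)$ is non-negative, since $\xi>0$, $(M(i)-I_S(i))^2\ge 0$, and $\hat{l}(i)>0$ (the constraint $\lambda_L\le\hat{\lambda}\le 1$ guarantees that $\hat{l}(R,I_{\overline{S}})$ is a genuine distribution with positive entries). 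Moreover $B(\xi)>0$ whenever $M\neq I_S$, which holds in the nontrivial case.

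I would then conclude $D'_{sym}(\xi)=A(\xi)+B(\xi)>0$, so $D_{sym}(\xi)$ increases in $\xi$. Finally, since $\hat{\lambda}=1/\xi$, increasing $\xi$ corresponds to decreasing $\hat{\lambda}$, and therefore decreasing $\hat{\lambda}$ increases the symmetrized KL-divergence between $\hat{l}(R,I_{\overline{S}})$ and $I_S$, as claimed.

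The main obstacle is essentially bookkeeping rather than a genuine difficulty: the proof hinges on recognizing that the first piece of $D'_{sym}(\xi)$ reproduces $D'(\xi)$ from Proposition~\ref{pro:maxKL} verbatim, so that its positivity can be \emph{imported} rather than re-derived, and on checking that the remaining piece is a sum of manifestly non-negative terms. The one point that needs care is justifying $\hat{l}(i)>0$, which ensures both that the logarithms are well defined and that the denominators in $B(\xi)$ are positive; this is supplied by the admissibility range $\lambda_L\le\hat{\lambda}\le 1$ already discussed in the excerpt.
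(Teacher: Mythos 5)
Your proof is correct, but it takes a genuinely different route from the paper's in its key step. The paper keeps the two directed divergences separate: having imported Proposition~\ref{pro:maxKL} for $D(\hat{l}(R,I_{\overline{S}}),I_S)$, it computes the derivative of the reverse divergence, $\frac{d}{d\xi}D(I_S,\hat{l}(R,I_{\overline{S}})) = \sum_i \frac{-I_S(i)(M(i)-I_S(i))}{\hat{l}(i)}$, and proves its positivity by a case analysis on $M(i)>I_S(i)$, $M(i)<I_S(i)$, $M(i)=I_S(i)$, using the fact that $M(i)$ lies between $\hat{l}(i)$ and $I_S(i)$ to bound each term below by $-(M(i)-I_S(i))$, and then summing to zero. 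You instead collapse the symmetrized divergence into the single sum $\sum_i(\hat{l}(i)-I_S(i))\log(\hat{l}(i)/I_S(i))$ (pure algebra; the normalization you invoke is not actually needed for this identity), substitute $\hat{l}(i)-I_S(i)=\xi(M(i)-I_S(i))$, and let the product rule produce $A(\xi)+B(\xi)$, where $A(\xi)$ is exactly the derivative from Proposition~\ref{pro:maxKL} and $B(\xi)=\sum_i \xi(M(i)-I_S(i))^2/\hat{l}(i)$ is a sum of squares. The two "second pieces" differ termwise by exactly $M(i)-I_S(i)$, which sums to zero, so your rearrangement silently redistributes that zero-sum correction and converts the paper's case analysis into manifest non-negativity --- a cleaner argument. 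What the paper's route buys in exchange is a slightly stronger standalone fact: the reverse divergence $D(I_S,\hat{l}(R,I_{\overline{S}}))$ is \emph{itself} monotone in $\hat{\lambda}$, which your collapsed form does not directly establish. One shared caveat: both proofs need $\hat{l}(i)>0$, which you rightly flag; strictly speaking this requires $\hat{\lambda}>\lambda_L$ (at $\hat{\lambda}=\lambda_L$ some entry of $\hat{l}$ vanishes by construction, and the reverse divergence degenerates), a boundary point the paper glosses over as well.
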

\begin{proof}
Let the symmetrized KL-divergence of between $\hat{l}(R,I_{\overline{S}})$ and $I_S$ be denoted as
\begin{equation}\label{eq:SDivDef}
SD(\hat{l}(R,I_{\overline{S}}),I_S) = D(\hat{l}(R,I_{\overline{S}}),I_S) + D(I_S, \hat{l}(R,I_{\overline{S}}))
\end{equation}
Since we have proved in Proposition~\ref{pro:maxKL} that the increasing trend of $D(\hat{l}(R,I_{\overline{S}}),I_S)$ when $\hat \lambda$ decreases, we now only need to prove the same result for $D(I_S, \hat{l}(R,I_{\overline{S}}))$, which is computed by:
\begin{equation}\label{eq:InvDivDef}
D(I_S, \hat{l}(R,I_{\overline{S}})) = \sum_{i}I_S(i)\log(\frac{I_S(i)}{\hat l(i)}) = \sum_{i} I_S(i)\log I_S(i) - \sum_{i} I_S(i)\log \hat l(i)
\end{equation}
Now let $\xi=1/{\hat \lambda}$. According to Eq.~\ref{eqa:computeR}, we have $\hat{l}(R,I_{\overline{S}}) = \xi \times M +
(1-\xi)\times I_S$. It then turns out that
\begin{equation}\label{eq:est_Ri}
 \hat l(i) = \xi \times (M(i)-I_S(i)) + I_S(i).
\end{equation}
Based on Eq.~\ref{eq:InvDivDef} and Eq.~\ref{eq:est_Ri}, we get:
\begin{equation}
D(I_S, \hat{l}(R,I_{\overline{S}})) = \sum_{i} I_S(i)\log I_S(i) - \sum_{i} I_S(i)\log (\xi \times (M(i)-I_S(i)) + I_S(i))
\end{equation}
Let $D(\xi) = D(I_S, \hat{l}(R,I_{\overline{S}}))$. The derivative of $D(\xi)$ can be calculated as
\begin{equation}
D'(\xi) = \sum_{i} \frac{-I_S(i)(M(i)-I_S(i))}{\xi \times (M(i)-I_S(i)) + I_S(i)} = \sum_{i} \frac{-I_S(i)(M(i)-I_S(i))}{\hat l(i)}
\end{equation}
Since $M(i)$ is a linear combination of $\hat l(i)$ and $I_S(i)$, $M(i)$ is a in-between value of $\hat l(i)$ and $I_S(i)$. In other words, if $M(i)>I_S(i)$, then $\hat l(i)>M(i)>I_S(i)$, while $\hat l(i)<M(i)<I_S(i)$ if $M(i)<I_S(i)$.

If $M(i)>I_S(i)$, since $M(i)-I_S(i)>0$ and $0<\frac{I_S(i)}{\hat l(i)}<1$, we  have
\begin{displaymath}
\frac{-I_S(i)(M(i)-I_S(i))}{\hat l(i)} > -(M(i)-I_S(i))
\end{displaymath}
If $M(i)<I_S(i)$, since $M(i)-I_S(i)<0$ and $\frac{I_S(i)}{\hat l(i)}>1$, we have
\begin{displaymath}
\frac{-I_S(i)(M(i)-I_S(i))}{\hat l(i)} > -(M(i)-I_S(i))
\end{displaymath}
We then have
\begin{equation}\label{eq:SD_derivative_greaterthan0}
\begin{split}
D'(\xi) & = \sum_{i} \frac{-I_S(i)(M(i)-I_S(i))}{\hat l(i)} \\
& = \sum_{i:M(i)>I_S(i)} \frac{-I_S(i)(M(i)-I_S(i))}{\hat l(i)} + \sum_{i:M(i)<I_S(i)} \frac{-I_S(i)(M(i)-I_S(i))}{\hat l(i)} + \sum_{i:M(i)=I_S(i)} \frac{-I_S(i)(M(i)-I_S(i))}{\hat l(i)} \\
& >  \sum_{i:M(i)>I_S(i)} -(M(i)-I_S(i)) + \sum_{i:M(i)<I_S(i)} -(M(i)-I_S(i)) + \sum_{i:M(i)=I_S(i)} -(M(i)-I_S(i)) \\
&=\sum_{i} -(M(i)-I_S(i)) \\
&= 0
\end{split}
\end{equation}
We now have $D'(\xi)>0$. It means that $D(\xi)$ (i.e., $D(I_S,\hat{l}(R,I_{\overline{S}}))$) will increase after $\xi$ increases. Since $\lambda = 1/\xi$,  after $\hat{\lambda}$ decreases, $D(I_S,\hat{l}(R,I_{\overline{S}})$ will increase. Combined with the result proved in Proposition~\ref{pro:maxKL}, we can conclude that  when $\hat{\lambda}$ decreases, the symmetrized KL-divergence $D(\hat{l}(R,I_{\overline{S}}),I_S)$ + $D(I_S,\hat{l}(R,I_{\overline{S}}))$ will increase monotonically.
\end{proof}
According to Proposition~\ref{pro:maxSD}, if $\lambda$ reduced to its lower bound $\lambda_L$, then the corresponding symmetrized KL-divergence$D(I_S,\hat{l}(R,I_{\overline{S}}))$ will be the maximum value. In this case, the output distribution of DSM will have the maximum symmetrized KL-divergence with the seed irrelevance distribution.

\subsection*{Effect of DSM on Maximizing JS-Divergence}
Now, let us further study the reduction of the coefficient $\hat \lambda$ on it role in  maximizing the JS-divergence between DSM's output distribution $\hat{l}(R,I_{\overline{S}})$ and the seed irrelevance distribution $I_S$.

\begin{Prop}\label{pro:maxJS}
If $\hat{\lambda}$ $(\hat{\lambda}>0)$ decreases, the JS-divergence between $\hat{l}(R,I_{\overline{S}})$ and $I_S$ will increase.
\end{Prop}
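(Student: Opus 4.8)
The plan is to mirror the strategy used for Propositions~\ref{pro:maxKL} and~\ref{pro:maxSD}: reparametrize by $\xi = 1/\hat{\lambda}$ so that a decrease in $\hat{\lambda}$ corresponds to an increase in $\xi$, express the JS-divergence as a function $JS(\xi)$, and prove $JS'(\xi) > 0$. I would write the JS-divergence through its mixture distribution, $JS(\hat{l}(R,I_{\overline{S}}),I_S) = \frac{1}{2}D(\hat{l},A) + \frac{1}{2}D(I_S,A)$, where $A(i) = \frac{1}{2}(\hat{l}(i)+I_S(i))$ is the pointwise average and $D$ is the KL-divergence defined in Eq.~\ref{eq:DivDef}. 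As in the earlier proofs, substituting $\hat{l}(i) = \xi(M(i)-I_S(i)) + I_S(i)$ from Eq.~\ref{eq:est_Li} gives $A(i) = \frac{1}{2}\xi(M(i)-I_S(i)) + I_S(i)$, so both $\hat{l}$ and $A$ are affine in $\xi$ while $I_S$ is constant. Writing $d_i = M(i)-I_S(i)$, I would record the two facts that drive everything: $\sum_i d_i = 0$ (since $M$ and $I_S$ are distributions), and $\frac{d}{d\xi}\hat{l}(i) = d_i$, $\frac{d}{d\xi}A(i) = \frac{1}{2}d_i$.

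Next I would differentiate $JS(\xi)$ term by term. The derivative of $\frac{1}{2}D(\hat{l},A)$ splits into a logarithmic piece $\frac{1}{2}\sum_i d_i \log(\hat{l}(i)/A(i))$ and two ``score-type'' pieces, namely $\frac{1}{2}\sum_i d_i$ (from differentiating the numerator $\hat{l}$) and $-\frac{1}{4}\sum_i \frac{\hat{l}(i)d_i}{A(i)}$ (from differentiating the $\log A$ inside). The derivative of $\frac{1}{2}D(I_S,A)$ contributes one further score-type piece $-\frac{1}{4}\sum_i \frac{I_S(i)d_i}{A(i)}$, since $\log I_S(i)$ is constant in $\xi$.

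The crux, and the step I expect to carry the weight, is that all three score-type pieces cancel. The piece $\frac{1}{2}\sum_i d_i$ vanishes because $\sum_i d_i = \sum_i (M(i)-I_S(i)) = 0$, exactly as in the proof of Proposition~\ref{pro:maxKL}. For the remaining two, because $A$ is precisely the average we have $\hat{l}(i)+I_S(i) = 2A(i)$, so their sum is $-\frac{1}{4}\sum_i \frac{(\hat{l}(i)+I_S(i))d_i}{A(i)} = -\frac{1}{4}\sum_i \frac{2A(i)d_i}{A(i)} = -\frac{1}{2}\sum_i d_i = 0$. This collapses the derivative to the clean expression $JS'(\xi) = \frac{1}{2}\sum_i (M(i)-I_S(i))\log\frac{\hat{l}(i)}{A(i)}$. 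Recognizing this cancellation is the one genuinely non-routine observation; without it the derivative is a tangle of rational and logarithmic terms whose sign is not evident.

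Finally I would run the same sign analysis used in Propositions~\ref{pro:maxKL} and~\ref{pro:maxSD}, noting that $\hat{l}(i) - A(i) = \frac{1}{2}\xi d_i$ has the same sign as $d_i$ for $\xi>0$. When $M(i) > I_S(i)$ we have $d_i > 0$ and $\hat{l}(i) > A(i)$, so $\log(\hat{l}(i)/A(i)) > 0$ and the product is positive; when $M(i) < I_S(i)$ we have $d_i < 0$ and $\hat{l}(i) < A(i)$, so $\log(\hat{l}(i)/A(i)) < 0$ and the product is again positive; terms with $M(i) = I_S(i)$ vanish. Since $M \neq I_S$ in general, at least one term is strictly positive, giving $JS'(\xi) > 0$. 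Hence $JS(\xi)$ increases with $\xi$, and because $\hat{\lambda} = 1/\xi$, the JS-divergence between $\hat{l}(R,I_{\overline{S}})$ and $I_S$ increases as $\hat{\lambda}$ decreases, as claimed.
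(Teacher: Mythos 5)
Your proof is correct and takes essentially the same route as the paper's: both reparametrize by $\xi = 1/\hat{\lambda}$, differentiate the JS-divergence in $\xi$, cancel all the score-type terms using $\sum_i (M(i)-I_S(i)) = 0$ together with $\hat{l}(i)+I_S(i) = 2A(i)$, and finish with the identical term-by-term sign analysis of $\sum_i (M(i)-I_S(i))\log\bigl(\hat{l}(i)/A(i)\bigr)$. The only difference is presentational, as the paper works with $J(\xi) = 2\,JS(\xi)$ and writes the ratio as $2\hat{l}(i)/(\hat{l}(i)+I_S(i))$ instead of introducing the mixture notation $A$.
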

\begin{proof}
Let the JS-divergence of between $\hat{l}(R,I_{\overline{S}})$ and $I_S$ be denoted as
\begin{equation}\label{eq:JS}
JS(\hat{l}(R,I_{\overline{S}}),I_S) = \frac{1}{2} (D(\hat{l}(R,I_{\overline{S}}),\frac{\hat{l}(R,I_{\overline{S}})+I_S}{2})+D(I_S,\frac{\hat{l}(R,I_{\overline{S}})+I_S}{2}))
\end{equation}
Now let $\xi=1/{\hat \lambda}$. Based on Eq.~\ref{eq:JS} and Eq.~\ref{eq:est_Ri}, we get
\begin{equation}
\begin{split}
JS(\hat{l}(R,I_{\overline{S}}),I_S) & = \frac{1}{2} \sum_i (\xi \times (M(i)-I_S(i)) + I_S(i)) \log(\frac{2\xi \times (M(i)-I_S(i)) + 2I_S(i)}{\xi \times (M(i)-I_S(i)) + 2I_S(i)}) \\
&~~~ + \frac{1}{2}\sum_i I_S(i) \log(\frac{2I_S(i)}{\xi \times (M(i)-I_S(i)) + 2I_S(i)})
\end{split}
\end{equation}
Let $J(\xi) = 2\times JS(\hat l,I_S)$, we can have
\begin{equation}
\begin{split}
J(\xi) & = \sum_i(\xi \times (M(i)-I_S(i)) + I_S(i))\log(2\xi \times (M(i)-I_S(i)) + 2I_S(i)) \\
&~~~~ - \sum_i(\xi \times (M(i)-I_S(i)) + I_S(i))\log(\xi \times (M(i)-I_S(i)) + 2I_S(i)) \\
&~~~~ + \sum_i I_S(i) \log 2I_S(i) - \sum_i I_S(i)\log(\xi \times (M(i)-I_S(i)) + 2I_S(i))
\end{split}
\end{equation}
The derivative of $J(\xi)$ can be calculated as
\begin{equation}
\begin{split}
J'(\xi) & = \sum_i(M(i)-I_S(i))\log(2\xi \times (M(i)-I_S(i)) + 2I_S(i))  \\
& ~~~~+ \sum_i(\xi \times (M(i)-I_S(i)) + I_S(i))\frac{M(i)-I_S(i)}{\xi \times (M(i)-I_S(i)) + I_S(i)} \\
& ~~~~- \sum_i (M(i)-I_S(i))\log(\xi \times (M(i)-I_S(i)) + 2I_S(i)) \\
& ~~~~- \sum_i(\xi \times (M(i)-I_S(i)) + I_S(i))\frac{M(i)-I_S(i)}{\xi \times (M(i)-I_S(i)) + 2I_S(i)} \\
& ~~~~ -\sum_i I_S(i)\frac{M(i)-I_S(i)}{\xi \times (M(i)-I_S(i)) + 2I_S(i)}
\end{split}
\end{equation}
Since $\hat l(i) = \xi \times (M(i)-I_S(i)) + I_S(i)$ (see Eq.~\ref{eq:est_Ri}), we have
\begin{equation}
\begin{split}
J'(\xi) & = \sum_i(M(i)-I_S(i))\log(2\hat l(i))  + \sum_i \hat l(i) \frac{M(i)-I_S(i)}{\hat l(i)} - \sum_i (M(i)-I_S(i))\log(\hat l(i) + I_S(i)) \\
& ~~~~- \sum_i \hat l(i)\frac{M(i)-I_S(i)}{\hat l(i) + I_S(i)} -\sum_i I_S(i)\frac{M(i)-I_S(i)}{\hat l(i)+I_S(i)} \\
& = \sum_i(M(i)-I_S(i))\log(\frac{2\hat l(i)}{\hat l(i)+I_S(i)}) - \sum_i  (\hat l(i)+I_S(i)) \frac{M(i)-I_S(i)}{\hat l(i)+I_S(i)} \\
& = \sum_i(M(i)-I_S(i))\log(\frac{2\hat l(i)}{\hat l(i)+I_S(i)})
\end{split}
\end{equation}
If $M(i)>I_S(i)$, since $M(i)-I_S(i)>0$ and $\hat l(i)>I_S(i)\geq 0$, we  have
\begin{displaymath}
(M(i)-I_S(i))\log(\frac{2\hat l(i)}{\hat l(i)+I_S(i)}) >0
\end{displaymath}
If $M(i)<I_S(i)$, since $M(i)-I_S(i)<0$ and $0 \leq \hat l(i) < I_S(i)$, we have
\begin{displaymath}
(M(i)-I_S(i))\log(\frac{2\hat l(i)}{\hat l(i)+I_S(i)}) >0
\end{displaymath}
We then have $J'(\xi)>0$. It means that $JS(\hat{l}(R,I_{\overline{S}}),I_S)$) increases after $\xi$ increases. Since $\lambda = 1/\xi$,  after $\hat{\lambda}$ decreases, $JS(\hat{l}(R,I_{\overline{S}}),I_S)$ will increase monotonically.
\end{proof}
According to Proposition~\ref{pro:maxJS}, if $\lambda$ reduced to its lower bound $\lambda_L$, then the corresponding JS-divergence $JS(\hat{l}(R,I_{\overline{S}}),I_S)$ will be the maximum value. In this case, the output distribution of DSM will have the maximum symmetrized JS-divergence with the seed irrelevance distribution.

\section*{Analysis of Relation between DSM and Mixture Model Feedback}
\label{sec:sec:mmf}

A related model of DSM is the Mixture Model Feedback (MMF) approach which assumes that feedback documents are generated from a mixture model with two multinomial components, i.e., the query topic model and the collection model~\cite{Zhai01model-basedfeedback}. In this section we theoretically investigate the linear combination condition of DSM in a related Mixture Model~\cite{Zhai01model-basedfeedback}.

The estimation of the output ``relevant'' query model of MMF is trying to purify the feedback document by eliminating the effect of the collection model, since the collection model contains background noise which can be regarded as the ``irrelevant'' content in the feedback document~\cite{Zhai01model-basedfeedback}. In this sense, similar to DSM, the task of MMF can also be regarded as removing the irrelevant part in the mixture model. However, to our knowledge, researchers have not investigated if the linear combination assumption is valid or not in MMF. We will theoretically prove that the mixture model in MMF is indeed a linear combination of ``relevant'' and ``irrelevant'' parts. This theoretical result can lead to a simplified version of MMF based on linear separation algorithm of DSM. Next, we first review the Mixture Model Feedback Approach in detail.

\subsection*{Review of Mixture Model Feedback Approach}
In the Mixture Model Feedback (MMF) approach, the likelihood of feedback documents ($\mathcal{F}$) can be written as:
\begin{align}
\label{eq:emlikelihood}
\log p(\mathcal{F}|\theta_F) = \sum_{d\in \mathcal{F}}\sum_{w\in d}c(w;d)\log[\lambda p(w|\theta_F) + (1-\lambda) p(w|C)]
\end{align}
where $c(w;d)$ is the count of a term $w$ in a document $d$, $p(w|\theta_F)$ is the query topic model which can be regarded as the relevance distribution to be estimated, and $p(w|C)$ is the collection distribution containing the background information. The empirically assigned parameter $\lambda$ is the amount of true relevance distribution and  $1-\lambda$ indicates the amount of background noise, i.e., the influence of $C$ in the feedback documents. An EM method \cite{Zhai01model-basedfeedback} is developed to estimate the relevance distribution via maximizing the likelihood in Equation \ref{eq:emlikelihood}. It contains iterations of two steps~\cite{zhai2007note}:
\begin{align}
\label{eq:e-step}
&p(z_w=1|\mathcal{F}, \theta_F^{(n)}) = \frac{(1 - \lambda) p(w|C)}{\lambda p(w|\theta_F^{(n)}) + (1 - \lambda) p(w|C)} &E\ step\\
\label{eq:m-step}
&p(w|R^{(n+1)}) = \frac{\sum_{d\in \mathcal{F}}(1 - p(z_w=1|\mathcal{F}, \theta_F^{(n)}))c(w,d)}{\sum_{d\in \mathcal{F}}\sum_{w^*\in V}(1 - p(z_{w^*}=1|\mathcal{F}, \theta_F^{(n)}))c(w^*,d)} &M\ step
\end{align}
where $p(z_w=1|\mathcal{F}, \theta_F^{(n)})$ is the probability that the word $w$ is from background distribution, given the current estimation of relevance distribution ($\theta_F^{(n)}$). This estimation can be regarded as a procedure to obtain relevant information from feedback documents while filtering the influence of collection distribution, leading to a more discriminative relevance model. It should be noted that in Eq.~\ref{eq:emlikelihood}, due to the $\log$ operator within the summations (i.e., $\sum_{d\in \mathcal{F}}\sum_{w\in d}c(w;d)$), it does not directly show that the mixture model is a linear combination of the collection model and the query topic model. Therefore, an EM algorithm is adopted to estimate the query topic model $\theta_F$.

\subsection*{The Simplification of EM Algorithm in MMF via Linear Separation Algorithm}
%The adaptive $\lambda$ in DSM results in more reasonable mixture model %assumptions for different queries.
Now, we first explore the connections between DSM and MMF. Once $\lambda$ is given (either by the estimation in DSM or by an assigned value in MMF), the next step is to estimate the true relevance distribution $R$. We will demonstrate that if the EM algorithm (in MMF) converges, the mixture model of the feedback documents is a linear combination of the collection model and the output model of the EM iterative algorithm.

%This indicates that the EM iterative algorithm can be simplified using Eq.~\ref{eq:MMF_lc2}, which adopts the same idea of a linear combination technique (see Eq.~\ref{eqa:computeR} ) used in DSM.

\begin{Prop}\label{pro:em}
If the EM algorithm (in MMF) converges, the mixture model of the feedback documents is a linear combination of the collection model and the output relevance model of the EM iterative algorithm.
\end{Prop}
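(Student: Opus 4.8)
The plan is to read off the ``mixture model of the feedback documents'' as the empirical term distribution $M(w)=N(w)/N$, where $N(w)=\sum_{d\in\mathcal{F}}c(w;d)$ is the total count of $w$ in the feedback set and $N=\sum_{w}N(w)$. Writing $R(w)=p(w|\theta_F)$ for the converged output relevance model and $C(w)=p(w|C)$ for the collection model, the goal is to show that $M$ is a linear combination of $R$ and $C$, and in fact that $M(w)=\lambda R(w)+(1-\lambda)C(w)$, i.e.\ exactly the linear combination assumed in Eq.~\ref{eqa:firstLC} with the seed irrelevance distribution $I_S$ played by $C$. I would obtain this purely from the fixed-point equations that the E-step (Eq.~\ref{eq:e-step}) and M-step (Eq.~\ref{eq:m-step}) must satisfy at convergence.

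The first and decisive step is to notice that the responsibility produced by the E-step, $p(z_w=1|\mathcal{F},\theta_F^{(n)})$, depends on the term $w$ but \emph{not} on the individual document $d$. Consequently, in the M-step the factor $1-p(z_w=1|\mathcal{F},\theta_F^{(n)})$ can be pulled outside the sum over $d\in\mathcal{F}$, so the numerator collapses to $g(w)\,N(w)$ with $g(w)=1-p(z_w=1|\mathcal{F},\theta_F^{(n)})$. At convergence $\theta_F^{(n)}=\theta_F$, so substituting Eq.~\ref{eq:e-step} gives $g(w)=\frac{\lambda R(w)}{\lambda R(w)+(1-\lambda)C(w)}$, and the M-step reduces to the single fixed-point relation $R(w)=g(w)N(w)/Z$ with $Z=\sum_{w'}g(w')N(w')$. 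In words, at convergence the M-step is nothing but a reweighting of the empirical counts $N(w)$; this is precisely the sense in which the EM iteration \emph{is} a distribution-separation process.

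The second step is pure algebra. Substituting the expression for $g(w)$ into $R(w)=g(w)N(w)/Z$, clearing the denominator, and dividing through by $R(w)$ on the terms with $R(w)>0$ yields $\lambda R(w)+(1-\lambda)C(w)=\frac{\lambda N}{Z}\,M(w)$. Setting $\alpha=\lambda N/Z$, this reads $M(w)=\frac{\lambda}{\alpha}R(w)+\frac{1-\lambda}{\alpha}C(w)$, which already exhibits $M$ as a linear combination of $R$ and $C$ and so establishes the proposition. To recover the DSM form exactly, I would then sum this identity over the vocabulary: since $M$, $R$ and $C$ each sum to one, the right-hand side sums to $(\lambda+1-\lambda)/\alpha=1/\alpha$, forcing $\alpha=1$ and hence $M(w)=\lambda R(w)+(1-\lambda)C(w)$. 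This is exactly Eq.~\ref{eqa:firstLC}, from which $R$ can be recovered in a single linear-separation step via Eq.~\ref{eqa:computeR} instead of by iterating EM to convergence.

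The step I expect to require the most care is the division by $R(w)$ together with the normalization argument that pins $\alpha=1$. The rearranged identity $\lambda R(w)+(1-\lambda)C(w)=\alpha M(w)$ is valid only on the support $\{w:R(w)>0\}$, and terms with $R(w)=0$ or $N(w)=0$ (typically those absent from the feedback documents) must be handled separately so that the summation compares $M$, $R$ and $C$ over one common vocabulary. Granting the mild positivity assumptions that hold in the smoothed IR setting (cf.\ Remark~\ref{remark:approLambda}), these degeneracies disappear, the identity holds for every $w$, and the conclusions $\alpha=1$ and $M=\lambda R+(1-\lambda)C$ follow cleanly.
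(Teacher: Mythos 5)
Your proposal is correct and takes essentially the same route as the paper's proof: both start from the EM fixed-point condition, substitute the E-step into the M-step, divide by the converged relevance model $p(w|\theta_F^{(n)})$ (your $R(w)$), deduce that $\lambda R(w)+(1-\lambda)C(w)$ is proportional to the empirical term frequency with a $w$-independent constant, and then use normalization of the distributions to force that constant to be $1$, yielding $tf(w,\mathcal{F})=\lambda R(w)+(1-\lambda)C(w)$. Your write-up merely streamlines the bookkeeping (collapsing the sum over documents early, since the E-step responsibilities do not depend on $d$, and pinning the constant by summing the identity directly rather than via the paper's multiply-by-$(1-\lambda)p(w|C)$-and-add manipulation), and your explicit caveat about dividing only where $R(w)>0$ is an assumption the paper uses silently.
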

\begin{proof}
%\frac{(1 - \lambda)p(w|R^{(n)})}{\lambda p(w|C) + (1 - \lambda)p(w|R^{(n)})}
When the EM method converges in the mixture model feedback approach, without loss of generality, we let $p(w|\theta_F^{(n+1)}) = p(w|\theta_F^{(n)})$. In addition to this, we can replace the $p(z_w=1|\mathcal{F}, \theta_F^{(n)})$ in Equation \ref{eq:m-step} using Equation \ref{eq:e-step}:
\begin{equation}
\label{eq:converge}
\begin{split}
p(w|\theta_F^{n+1}) = p(w|\theta_F^{(n)})
=& \frac{\sum_{d\in \mathcal{F}}\left[1 - \frac{(1 - \lambda) p(w|C)}{\lambda p(w|\theta_F^{(n)}) + (1 - \lambda) p(w|C)}\right]\cdot c(w,d)}{\sum_{d\in \mathcal{F}}\sum_{w^*\in V}\left[1 - \frac{(1 - \lambda) p(w^*|C)}{\lambda p(w^*|\theta_F^{(n)}) + (1 - \lambda) p(w^*|C)}\right]\cdot c(w^*,d)}\\
=& \frac{\sum_{d\in \mathcal{F}}\frac{\lambda p(w|\theta_F^{(n)})}{\lambda p(w|\theta_F^{(n)}) + (1 - \lambda) p(w|C)}\cdot c(w,d)}{\sum_{d\in \mathcal{F}}\sum_{w^*\in V}\frac{\lambda p(w^*|\theta_F^{(n)})}{\lambda p(w^*|\theta_F^{(n)}) + (1 - \lambda) p(w^*|C)}\cdot c(w^*,d)}
\end{split}
\end{equation}
By dividing $p(w|\theta_F^{(n)})$ in both the second term and the fourth term in Eq.~\ref{eq:converge}, we get:
\begin{align}
1 = \frac{\sum_{d\in \mathcal{F}}\frac{\lambda}{\lambda p(w|\theta_F^{(n)}) + (1 - \lambda) p(w|C)}\cdot c(w,d)}{\sum_{d\in \mathcal{F}}\sum_{w^*\in V}\frac{\lambda p(w^*|\theta_F^{(n)})}{\lambda p(w^*|\theta_F^{(n)}) + (1 - \lambda) p(w^*|C)}\cdot c(w^*,d)}
\end{align}
Then, it turns out that, for a particular word $w$:
\begin{align}
\sum_{d\in \mathcal{F}}\frac{c(w,d)}{\lambda p(w|\theta_F^{(n)}) + (1 - \lambda) p(w|C)}
= \sum_{d\in \mathcal{F}}\sum_{w^*\in V}\frac{p(w^*|\theta_F^{(n)})c(w^*,d)}{\lambda p(w^*|\theta_F^{(n)}) + (1 - \lambda) p(w^*|C)}\notag
\end{align}
Replace $\sum_{d\in \mathcal{F}}c(w,d_i)$ with $c(w^*,\mathcal{F})$, we can get:
\begin{align}
\frac{c(w,\mathcal{F})}{\lambda p(w|\theta_F^{(n)}) + (1 - \lambda) p(w|C)}
\label{emproof1}
%=& \sum_{w^*\in V}\left\{\sum_{i=1}^{k}\left[\frac{p(w^*|R^{(n)})}{\lambda p(w^*|C) + (1 - \lambda)p(w^*|R^{(n)})}\right]c(w^*,d_i)\right\}\\
= \sum_{w^*\in V}\frac{p(w^*|\theta_F^{(n)})c(w^*,\mathcal{F})}{\lambda p(w^*|\theta_F^{(n)}) + (1 - \lambda) p(w^*|C)}
\end{align}
%Divide number of words in feedback documents on both sides of the above equation, we get:
%\begin{align}
%\label{emproof1}
%&\frac{tf(w,\mathcal{F})}{\lambda p(w|C) + (1 - \lambda)p(w|R^{(n)})}
%= \sum_{w^*\in V}\frac{p(w^*|R^{(n)})tf(w^*,\mathcal{F})}{\lambda p(w^*|C) + (1 - \lambda)p(w^*|R^{(n)})}
%\end{align}
If each side of Equation \ref{emproof1} is multiplied by $(1-\lambda) p(w|C)$ then it becomes:
\begin{align}
\label{subsum}
\frac{(1-\lambda) p(w|C)c(w,\mathcal{F})}{\lambda p(w|\theta_F^{(n)}) + (1 - \lambda) p(w|C)}
= (1-\lambda) p(w|C)\sum_{w^*\in V}\frac{p(w^*|\theta_F^{(n)})c(w^*,\mathcal{F})}{\lambda p(w^*|\theta_F^{(n)}) + (1 - \lambda) p(w^*|C)}
\end{align}
We can obtain the Equation \ref{subsum} for any word $w$ in the vocabulary, and now we sum them together:
\begin{align}
\label{sum}
\sum_{w^{*}\in V}\frac{(1-\lambda) p(w^{*}|C)c(w^{*},\mathcal{F})}{\lambda p(w^*|\theta_F^{(n)}) + (1 - \lambda) p(w^*|C)}
%=& \left[\sum_{w^{**}\in V}\lambda p(w^{**}|C)\right]\left[\sum_{w^*\in V}\frac{p(w^*|R^{(n)})c(w^*,\mathcal{F})}{\lambda p(w^*|C) + (1 - \lambda)p(w^*|R^{(n)})}\right]\notag\\
= \sum_{w^*\in V}\frac{(1-\lambda) p(w^*|\theta_F^{(n)})c(w^*,\mathcal{F})}{\lambda p(w^*|\theta_F^{(n)}) + (1 - \lambda) p(w^*|C)}
\end{align}
then we add $\sum_{w^{*}\in V}\frac{\lambda p(w^{*}|\theta_F^{(n)})c(w^{*},\mathcal{F})}{\lambda p(w^*|\theta_F^{(n)}) + (1 - \lambda) p(w^*|C)}$ to both sides of Equation \ref{sum}:
\begin{align}
\label{emproof2}
\sum_{w^{*}\in V}c(w^{*},\mathcal{F}) =& \sum_{w^*\in V}\frac{p(w^*|\theta_F^{(n)})c(w^*,\mathcal{F})}{\lambda p(w^*|\theta_F^{(n)}) + (1 - \lambda) p(w^*|C)}
\end{align}
According to Equation \ref{emproof1} and Equation \ref{emproof2}, we get:
\begin{align}
\frac{c(w,\mathcal{F})}{\lambda p(w|\theta_F^{(n)}) + (1 - \lambda) p(w|C)} = \sum_{w^{*}\in V}c(w^{*},\mathcal{F})
\end{align}
namely,
\begin{align}\label{eq:MMF_lc1_appendix}
\lambda p(w|\theta_F^{(n)}) + (1 - \lambda) p(w|C) = \frac{c(w,\mathcal{F})}{\sum_{w^{*}\in V}c(w^{*},\mathcal{F})} = tf(w,\mathcal{F})
\end{align}
where $tf(w,\mathcal{F})$ is the term frequency in the feedback documents. The above equation illustrates that the estimated distribution $\theta_F^{(n)}$ fits in a linear combination as used in Eq.~\ref{eqa:firstLC} of DSM.
\end{proof}

This proposition actually can be formulated as:

\begin{equation}\label{eq:MMF_lc1}
\lambda p(w|\theta_F^{(n)}) + (1 - \lambda) p(w|C) = \frac{c(w,\mathcal{F})}{\sum_{w^{*}\in V}c(w^{*},\mathcal{F})} = tf(w,\mathcal{F})
\end{equation}
where $tf(w,\mathcal{F})$ is the mixture model which represents the term frequency in the feedback documents, $p(w|C)$ is the collection model and $p(w|\theta_F^{(n)})$ is the estimated relevance model output by the $n^{th}$ step of the EM iterative algorithm in MMF. The above equation can be derived as:
\begin{align}\label{eq:MMF_lc2}
p(w|\theta_F^{(n)}) = \frac{1}{\lambda}\cdot tf(w,\mathcal{F}) + \left(1 - \frac{1}{\lambda}\right)\cdot p(w|C)
\end{align}
The idea is that we can regard $p(w|\theta_F^{(n)})$ as an estimated relevance distribution, $tf(w,\mathcal{F})$ as a kind of mixture distribution, and $p(w|C)$ as a kind of irrelevance distribution. Then,  Eq.~\ref{eq:MMF_lc1} fits Eq.~\ref{eqa:firstLC}, and Eq.~\ref{eq:MMF_lc2} is the same distribution separation process as the Eq.~\ref{eqa:computeR}, where $\hat{l}(R,I_{\overline{S}})$ is the estimated relevance distribution. It further demonstrates that the EM iterative steps in MMF can actually be simplified to the linear separation solution similar to Eq.~\ref{eqa:computeR}.

%Furthermore, if we regard $p(w|\theta_F^{(n)})$ as an estimated relevance distribution, $tf(w,\mathcal{F})$ as a kind of mixture distribution, and $p(w|C)$ is a kind of irrelevance distribution, then Eq.~\ref{eq:MMF_lc1} fits Eq.~\ref{eqa:firstLC}, and Eq.~\ref{eq:MMF_lc2} shares the same distribution separation idea as the Eq.~\ref{eqa:computeR} used in DSM, where $\hat{l}(R,I_{\overline{S}})$ is the estimated relevance distribution. We will also demonstrate the above connections between MMF and DSM empirically in Section~\ref{sec:sec:discuss_mmf}.
%This EM method and Equation \ref{eqa:computeR} in DSM both can separate a relevance distribution from a mixture model. We can prove that given the same inputs, i.e., the same component distributions for mixture model and the same $\lambda$, these two methods are equivalent.

Another simplified solution to MMF was proposed in \cite{DBLP:journals/ipm/ZhangX08}. This solution is derived by Lagrange multiplier method, and the complexity of its divide \& conquer algorithm is $O(n)$ (on average) to $O(n^2)$ (the worst case). On the other hand, our simplified solution in Eq.~\ref{eq:MMF_lc2} was analytically derived from the convergence condition of the EM method in the MMF approach, and the complexity of the linear combination algorithm in Eq.~\ref{eq:MMF_lc2} is further reduced to a fixed linear complexity, i.e., $O(n)$.

Besides of providing a simplified solution with linear complexity to the EM method in MMF, DSM shows an essential difference regarding the coefficient $\lambda$. In MMF, the proportion of relevance model in the assumed mixture model $tf(w,\mathcal{F})$ is controlled by $\lambda$, which is a free parameter and is empirically assigned to a fixed value before running the EM algorithm. On the other hand, in DSM, as aforementioned in Section~\ref{sec:DSM}, $\lambda$ for each query is estimated via an analytical procedure based on its lower bound analysis (see Section~\ref{subsec:DMRSND}), a minimum correlation analysis (see Section~\ref{subsec:MCA}), and a maximal KL-divergence analysis described in Section~\ref{sec:sec:kl}.

\begin{comment}
Firstly, in MMF, the two input distributions involved in the linear combination are the collection distribution ($C$) and term frequencies ($tf(w,\mathcal{F})$, simplified as $TF$), while the counterparts in DSM are seed irrelevance distribution $I_S$ and mixture distribution $M$ in Eq.~\ref{eqa:firstLC}, respectively. This is due to the different motivations of these two methods and reveals that if we want to compare them we need to apply them using the same input distributions (as in experiments of Section~\ref{sec:sec:discuss_mmf}). Secondly,
\end{comment}

\section*{Experiments}
We have theoretically described the relation between Mixture Model Feedback (MMF) approach and our DSM method. Experiments in this section provide empirical comparisons of these two methods.
% \subsection*{Discussion on DSM and Mixture Model Feedback Approach}\label{sec:sec:discuss_mmf}

\begin{figure*}
\centering
\includegraphics[height=0.42\textwidth]{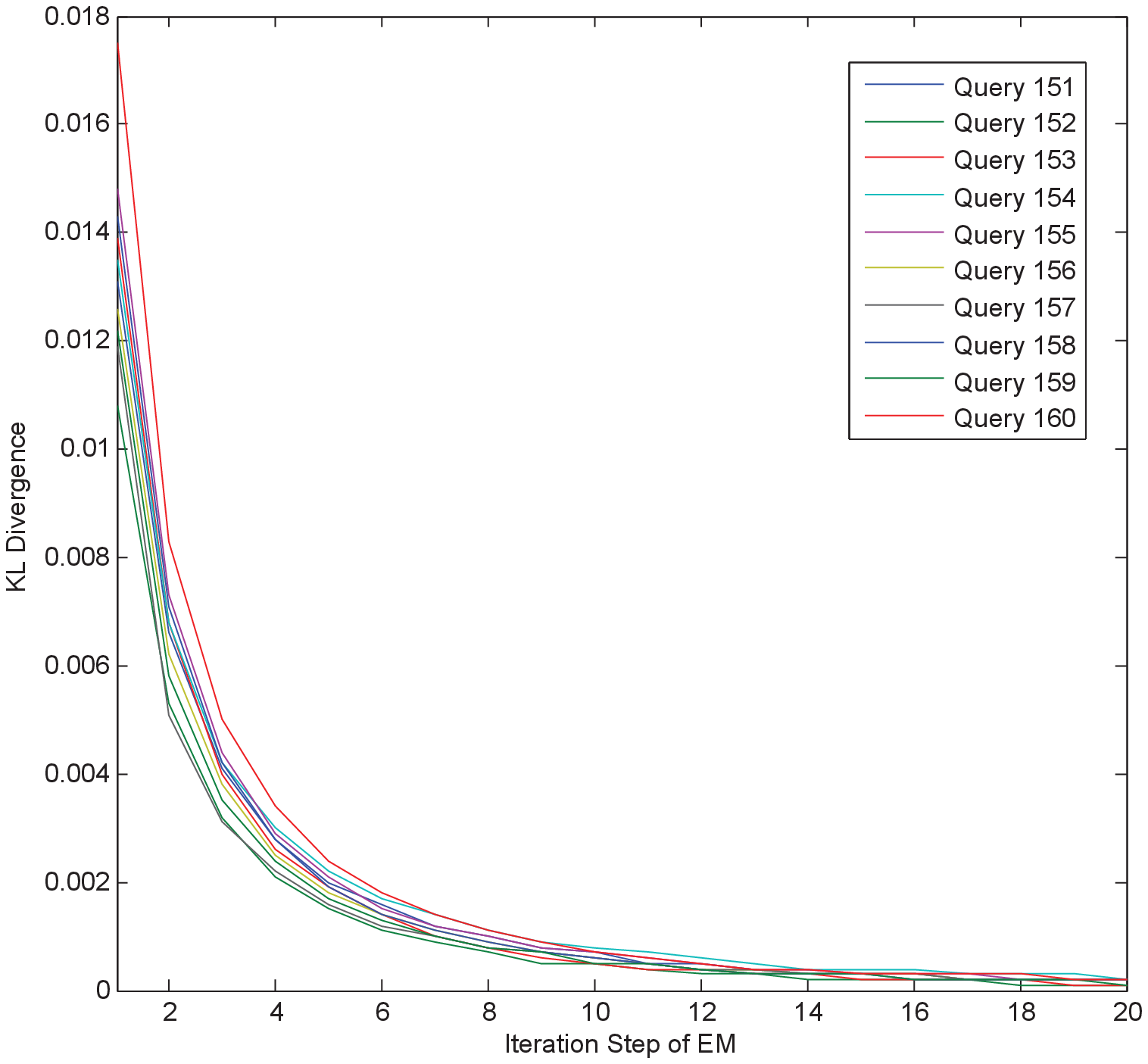}
\caption{KL Divergence between result of Equation \ref{eqa:computeR} and the current estimation of EM\label{fig:dsm-mmf}}
\end{figure*}

\begin{table*}[th]
\centering
\begin{tabular}{||l|c|c|c|c||} \hline
 MAP (chg\% over MMF) & WSJ8792 & AP8889 & ROBUST2004 & WT10G\\ \hline
 MMF & \tabincell{c}{0.3388\\($\lambda=0.2$)} & \tabincell{c}{0.3774\\($\lambda=0.2$)} & \tabincell{c}{0.2552\\($\lambda=0.1$)} & \tabincell{c}{0.1282\\($\lambda=0.3$)}\\ \hline
 DSM ($\lambda$ fixed) & \tabincell{c}{{0.3386}\\($\lambda=0.2$)} & \tabincell{c}{{0.3767}\\($\lambda=0.2$)} & \tabincell{c}{{0.2487}\\($\lambda=0.1$)}& \tabincell{c}{{0.1267}\\($\lambda=0.3$)}\\ \hline
 DSM- & 0.3474(+2.54\%)$^{*}$ & 0.3870(+2.54\%)$^{*}$ & 0.2889(+13.21\%)$^{**}$ & \textbf{0.1735}(+35.34\%)$^{**}$\\ \hline
 DSM & \textbf{0.3565}(+5.22\%)$^{**}$ & \textbf{0.3915}(+3.74\%)$^{*}$ & \textbf{0.2957}(+15.87\%)$^{**}$ & \textbf{0.1735}(+35.34\%)$^{**}$\\ \hline
 \multicolumn{5}{c}{Statistically significant improvement over MMF at level 0.05(*) and 0.01(**).}
 \end{tabular}
 \caption{Comparison of DSM and MMF approach with $TF$ and $C$ as input distributions \label{tab:MMF}}
\end{table*}

As aforementioned, the EM iteration algorithm of MMF can be simplified as a distribution separation procedure (see Equation~\ref{eq:MMF_lc2}) whose inputs are two distributions $tf(w,\mathcal{F})$ ($TF$ for short) and $p(w|C)$, where $TF$ is the distribution for which the probability of a term is its frequency in feedback documents, and $C$ is the collection distribution of terms. It has been shown that Equation~\ref{eq:MMF_lc2} is actually a special DSM when $TF$ and $C$ are DSM's input distributions and $\lambda$ is assigned empirically without principled estimation,  denoted as DSM ($\lambda$ fixed) (see Table \ref{tab:MMF}). Now, we compare MMF and DSM ($\lambda$ fixed) to test Proposition~\ref{pro:em} empirically.

At first, we directly measure the KL-divergence between resultant distributions of MMF and DSM ($\lambda$ fixed). We report the results of Query 151 - 160 on WSJ8792 with $\lambda=0.8$ in Figure \ref{fig:dsm-mmf}, and results of other queries/datasets show similar trends. It can be observed from this figure that the KL divergence between the results of two mentioned methods, i.e., MMF and DSM ($\lambda$ fixed) tends to be very close to zero, which supports the proof of their equivalence, i.e., Proposition~\ref{pro:em}.
%If we use $tf(w, \mathcal{F})$ instead of $c(w, d_i)$ in the EM method of MMF, the two iterative steps will become:
%\begin{align}
%\label{eq:e-step-distribution}
%&p(z_w=1|R^{(n)}) = \frac{\lambda p(w|C)}{\lambda p(w|C) + (1 - \lambda)p(w|R^{(n)})} &E\ step\\
%\label{eq:m-step-distribution}
%&p(w|R^{(n+1)}) = \frac{(1 - p(z_w=1|R^{(n)}))tf(w,\mathcal{F})}{\sum_{w^*\in V}(1 - p(z_w^*=1|R^{(n)}))tf(w^*,\mathcal{F})} &M\ step
%\end{align}

Next, we compare the retrieval performance of MMF and DSM ($\lambda$ fixed). For MMF, we set $\lambda$ to the value with the best performance, and this optimal value is also used in DSM with $\lambda$ fixed. Experimental results are shown in Table \ref{tab:MMF}. We can find that performances of these two methods are very close, which is consistent with the analysis in Section \ref{sec:sec:mmf}. This result again confirms that the EM methods in MMF can be simplified by Equation~\ref{eq:MMF_lc2} with little decline of performance.

%\begin{table*}[t]
%\footnotesize \centering \tbl{Comparison of DSM and MMF approach with the same distributions as inputs\label{tab:MMF}}{
%\begin{tabular}{||c|c|c|c|c||} \hline
% MAP & MMF & DSM($\lambda$ fixed) & DSM- & DSM\\ \hline
% WSJ8792 & 0.3388 ($\lambda=0.2$) & {0.3386} ($\lambda=0.2$) & 0.3474 & \textbf{0.3565}\\ \hline
% AP8889 & 0.3774 ($\lambda=0.2$) & {0.3767} ($\lambda=0.2$) & 0.3870 & \textbf{0.3915}\\ \hline
% ROBUST2004 & 0.2552 ($\lambda=0.1$) & {0.2487} ($\lambda=0.1$) & 0.2889 & \textbf{0.2957}\\ \hline
% WT10G & 0.1282 ($\lambda=0.3$) & {0.1267} ($\lambda=0.3$) & \textbf{0.1735} & 0.1586\\ \hline
%%\multicolumn{6}{c}{Statistically significant improvement over TF at level 0.05(*) and 0.01(**).}
%\end{tabular}}
%\end{table*}

In addition, we also apply DSM- and DSM in the same setting (i.e., when $TF$ and $C$ are input to DSM--/DSM as the mixture distribution and seed irrelevance distribution, respectively) for comparison. It is demonstrated in Table \ref{tab:MMF} that the performances of both DSM-- and DSM are significantly better than MMF. This is because although MMF and DSM($\lambda$ fixed) empirically tune $\lambda$ for each collection, the value of this parameter is the same for each query given the test collection. On the contrary, DSM- and DSM adopt the principled estimation of $\lambda$ \textit{adaptively} for each query based on lower-bound analysis, minimum correlation analysis and maximum KL-divergence analysis. This set of experiments demonstrate that the estimation method for $\lambda$ in DSM method is crucial and effective for background noise elimination.

\section*{Conclusion}
In this paper, we further study the theoretically properties of Distribution Separation Method (DSM). Specifically, we generalized the minimum correlation analysis in DSM to maximum (original and symmetrized) KL-divergence analysis. We also proved that the solution to the well-known Mixture Model Feedback can be simplified using the linear combination technique in DSM, and this is also empirically verified using standard TREC datasets. Equipped with these analysis, DSM now has solid theoretical foundation which makes its possible to further extend DSM with principle. In addition, comparison with MMF helps us to find more scenarios for application of DSM.

\bibliography{puremodel}

\begin{thebibliography}{1}
\expandafter\ifx\csname url\endcsname\relax
  \def\url#1{\texttt{#1}}\fi
\expandafter\ifx\csname urlprefix\endcsname\relax\def\urlprefix{URL }\fi
\providecommand{\bibinfo}[2]{#2}
\providecommand{\eprint}[2][]{\url{#2}}

\bibitem{ZhangSigir09}
\bibinfo{author}{Zhang, P.}, \bibinfo{author}{Hou, Y.} \&
  \bibinfo{author}{Song, D.}
\newblock \bibinfo{title}{Approximating true relevance distribution from a
  mixture model based on irrelevance data}.
\newblock In \emph{\bibinfo{booktitle}{Proceedings of the 32nd international
  ACM SIGIR conference on Research and development in information retrieval}},
  SIGIR '09, \bibinfo{pages}{107--114} (\bibinfo{year}{2009}).

\bibitem{Zhai01model-basedfeedback}
\bibinfo{author}{Zhai, C.} \& \bibinfo{author}{Lafferty, J.}
\newblock \bibinfo{title}{Model-based feedback in the language modeling
  approach to information retrieval}.
\newblock In \emph{\bibinfo{booktitle}{CIKM '01}}, \bibinfo{pages}{403--410}
  (\bibinfo{year}{2001}).

\bibitem{citeulike:361042}
\bibinfo{author}{Rodgers, J.~L.} \& \bibinfo{author}{Nicewander, A.~W.}
\newblock \bibinfo{title}{Thirteen ways to look at the correlation
  coefficient}.
\newblock \emph{\bibinfo{journal}{The American Statistician}}
  \textbf{\bibinfo{volume}{42}}, \bibinfo{pages}{59--66}
  (\bibinfo{year}{1988}).

\bibitem{zhai2007note}
\bibinfo{author}{Zhai, C.}
\newblock \bibinfo{title}{A note on the expectation-maximization (em)
  algorithm}.
\newblock \emph{\bibinfo{journal}{Course note of CS410}}
  (\bibinfo{year}{2007}).

\bibitem{DBLP:journals/ipm/ZhangX08}
\bibinfo{author}{Zhang, Y.} \& \bibinfo{author}{Xu, W.}
\newblock \bibinfo{title}{Fast exact maximum likelihood estimation for mixture
  of language model}.
\newblock \emph{\bibinfo{journal}{Inf. Process. Manage.}}
  \textbf{\bibinfo{volume}{44}}, \bibinfo{pages}{1076--1085}
  (\bibinfo{year}{2008}).
\newblock \urlprefix\url{http://dx.doi.org/10.1016/j.ipm.2007.12.003}.

\end{thebibliography}

% \noindent LaTeX formats citations and references automatically using the bibliography records in your .bib file, which you can edit via the project menu. Use the cite command for an inline citation, e.g.  \cite{Figueredo:2009dg}.

% \section*{Acknowledgements (not compulsory)}

% Acknowledgements should be brief, and should not include thanks to anonymous referees and editors, or effusive comments. Grant or contribution numbers may be acknowledged.

% \section*{Author contributions statement}

% Must include all authors, identified by initials, for example:
% A.A. conceived the experiment(s),  A.A. and B.A. conducted the experiment(s), C.A. and D.A. analysed the results.  All authors reviewed the manuscript.

% \section*{Additional information}

% To include, in this order: \textbf{Accession codes} (where applicable); \textbf{Competing financial interests} (mandatory statement).

% The corresponding author is responsible for submitting a \href{http://www.nature.com/srep/policies/index.html#competing}{competing financial interests statement} on behalf of all authors of the paper. This statement must be included in the submitted article file.

\end{document}